\documentclass[journal,9pt]{IEEEtran}

\usepackage[utf8]{inputenc}
\usepackage{amssymb}
\usepackage{amsmath}
\usepackage{amsthm}
\usepackage{verbatim}
\usepackage{graphicx}

\usepackage[utf8]{inputenc}
\usepackage[english]{babel}

\newtheorem{theorem}{Theorem}[section]

\newtheorem{lemma}[theorem]{Lemma}

\begin{document}

\newpage

\title{Modified Dorfman procedure for pool tests with dilution - COVID-19 case study}
\author{Andrzej Jaszkiewicz 

\thanks{A. Jaszkiewicz is with Poznan University of Technology, Faculty of Computing, Institute of Computing Science, ul. Piotrowo 2, 60-965 Poznan, Poland, e-mail: andrzej.jaszkiewicz@put.poznan.pl. This research has been supported by the statutory funds of the Faculty of Computing and Telecommunications, Poznan University of Technology.

}}

\markboth{}
{Shell \MakeLowercase{\textit{et al.}}: Bare Demo of IEEEtran.cls for Journals}

\date{\today}

\maketitle

\begin{abstract}
The outbreak of the global COVID-19 pandemic results in unprecedented demand for fast and efficient testing of large numbers of patients for the presence of SARS-CoV-2 coronavirus. Beside technical improvements of the cost and speed of individual tests, pool testing may be used to improve efficiency and throughput of a population test. Dorfman pool testing procedure is one of the best known and studied methods of this kind. This procedure is, however, based on unrealistic assumptions that the pool test has perfect sensitivity and the only objective is to minimize the number of tests, and is not well adapted to the case of imperfect pool tests. We propose and analyze a simple modification of this procedure in which test of a pool with negative result is independently repeated up to several times. The proposed procedure is evaluated in a computational study using recent data about dilution effect for SARS‐CoV‐2 PCR tests, showing that the proposed approach significantly reduces the number of false negatives with a relatively small increase of the number of tests, especially for small prevalence rates. For example, for prevalence rate $0.001$ the number of tests could be reduced to $22.1\%$ of individual tests, increasing the expected number of false negatives by no more than $1\%$, and to $16.8\%$ of individual tests increasing the expected number of false negatives by no more than $10\%$. At the same time, a similar reduction of the expected number of tests in the standard Dorfman procedure would yield $675\%$ and $821\%$ increase of the expected number of false negatives, respectively. This makes the proposed procedure an interesting choice for screening tests in the case of diseases like COVID-19.
\end{abstract}

\begin{IEEEkeywords}
Pool testing, simulation study, COVID-19, PCR tests, dilution effect
\end{IEEEkeywords}

\section{Introduction}

Identification of infected patients is crucial for providing appropriate treatment and limit further transmission of COVID-19 disease. Large number of infected or potentially infected patients results in unprecedented demand for fast and efficient testing of large populations. For example, the results of a recent simulation study reported by Larremore et al. \cite{Larremore2020} demonstrate that effective screening depends largely on frequency of testing and the speed of reporting and could turn the COVID-19 tide within weeks. Some countries have decided already to test the whole population. For example, in October 2020 Slovakia decided to test all adults for SARS-CoV-2 \cite{Holt2020}. However, due to the cost, availability, and efficiency of PCR test, Slovakia's government decided to use antigen tests which are known to have lower sensitivity and specificity. 

Beside technical improvements of the cost and speed of individual tests, pool testing is an interesting option to improve efficiency and throughput of a population test.  Dorfman pool testing procedure is one of the best known and studied methods of this kind \cite{Dorfman1943}. In this procedure, the subjects are grouped into a number of non-overlapping pools and the specimens from the pool subjects are tested together. If the result of a test for a pool is negative, all subjects are assumed to be negative. Otherwise, all subjects from the pool are individually tested. Since, the prevalence rate is usually much below 50\%, many pools may contain only negative subjects and may be efficiently identified. 

Dorfman procedure procedure is, however, based on some assumptions that are not realistic in the context of COVID-19 and many other diseases, and thus limit its practical applications. This procedure assumes that the pool test has perfect sensitivity and the only objective is efficiency usually expressed as minimization of the number of tests. This assumption may seem justified in the case of PCR tests, including tests for presence of SARS‐CoV‐2 RNA, since in (reverse transcription) polymerase chain reaction the target DNA is exponentially multiplied, thus theoretically even extremely small amount of target DNA/RNA could be detected. In practice, the sensitivity of pool test may become lower due to the dilution effect \cite{McMahan2012,Stramer2013}. Indeed some recent studies confirm that dilution significantly affects sensitivity of SARS‐CoV‐2 PCR tests \cite{Abdalhamid2020,Bateman2020,Perchetti2020}. For example, Bateman et al. \cite{Bateman2020} estimated that pools of 5 (with one positive subjects) lead to 93\% sensitivity, pools of 10 lead to 91\% sensitivity, and pools of 50 lead to 81\% sensitivity, while estimated sensitivity for undiluted specimens is 99\%. Thus, dilution effects are significant and need to be appropriately taken into account.

With lower sensitivity of the pool test many subjects may be false negative. Such situations are highly undesired in the context of diseases like COVID-19 since they make impossible providing appropriate treatment and limiting further transmission. Thus, test efficiency should not be the only objective \cite{Malinovsky2016}. Although a number of authors proposed improvements of Dorfman testing procedure or other pool testing schemes, these modifications usually focus on improving efficiency (number of tests) \cite{Kim2007,PHATARFOD1994}, for example, by using a hierarchy of sub-pools tested in more than two stages. Hierarchical approach, however, even further increases the risk of a subject being false negative, since a subject is assumed negative it at least one of its (sub-)pool tests was negative. Other works take into account only reduced specificity with the assumption of perfect sensitivity \cite{AMOS2000, Gastwirth2000,Gupta1999}. There are relatively few studies taking into account accuracy related objectives with imperfect sensitivity in pool tests \cite{Aprahamian2018,Aprahamian2019,HWANG1975,Kim2007}. Still, some of the approaches make the assumption that the imperfect sensitivity is constant and the same for both individual and pool tests \cite{Aprahamian2019, Graff1972,Graff1974159,Kim2007,Nyongesa2004}, in other words, they do not take into account the dilution effect. Some works taking into account the dilution effect concern the problem of estimating the prevalence rate \cite{McMahan2013,Nguyen2018,Wang2015}, not the problem of identifying positive/negative subjects.  Furthermore, even the studies that consider dilution effect with false negatives as (partial) objective \cite{Aprahamian2018,HWANG1975,Wein1996} do not propose modifications of the Dorfman test procedure that could reduce the number of false negatives. In general, in our opinion, the development of efficient and precise pool testing procedures did not obtain sufficient interest from the research community, due to limited practical interest before the outbreak of COVID-19 pandemic. 

We show, that the standard Dorfman pool testing procedure is not well adapted to the case of imperfect pool tests with the objective of minimizing the number of false negatives and we propose a simple modification of this procedure in which test of a pool with negative result is independently repeated up to a given number of times. This approach significantly reduces the number of false negatives with a relatively small increase of the number of tests. 

We present also a numerical study based on the recently reported data about the influence of the dilution effect on the sensitivity of SARS‐CoV‐2 PCR tests \cite{Bateman2020}. We show that the modified procedure is better on both the expected number of false negatives and the number of tests than the standard Dorfman procedure for a wide range of parameters. We show also that the proposed procedure may significantly reduce the expected number of tests with practically negligible increase of the expected number of false negatives compared to individual tests.

Several authors proposed repeating test of pools (retesting) under some conditions \cite{Graff1972,Graff1974159, Nyongesa2004}. In particular our procedure could bee seen as the special case of the procedure proposed by Graff and Roeloffs \cite{Graff1972} with $r_1=1$ and $r_2=r$. Still there are some important differences between these works and our approach, namely:
\begin{itemize}
    \item These works do not take into account dilution effect, which significantly influences the analysis and performance of pool testing. 
    \item The presented analysis differs from ours, in particular we use Bayes' theorem to update posterior probabilities.
    \item We present a simulation study verifying the presented analysis.
    \item We do not propose to repeat test of pools with positive results, since although it could slightly reduce the number of false positives, it increases the number of tests. At the same time, the number of false positives has relatively low importance in the context of diseases like COVID-19.
\end{itemize}
Furthermore, in \cite{Nyongesa2011} Nyongesa described a procedure in which a pool with negative result is retested just once, however, this procedure was analyzed only from the point of view of prevalence rate estimation without dilution effect.

The contributions of this paper are:
\begin{itemize}
    \item A modified Dorfman pool testing procedure for pool tests with dilution effect designed to improve the number of false negatives at a moderate cost of increased number of tests is proposed.
    \item Analytical formulas for the number of tests, the number of false negatives and the number of false positives in the proposed procedure are derived.
    \item The analytical formulas are verified in a computational simulation study.
    \item The practical contribution of the paper is that through a computational case study using recent data about SARS‐CoV‐2 PCR tests sensitivity it is shown that the proposed procedure may be highly beneficial in the case of diseases like COVID-19.
\end{itemize}

The paper is organized in the following way. In the next Section, we define basic concepts and notation. Then, we present the analysis of the standard Dorfman pool testing procedure from the point of view of the number of tests, the number of false negatives and the number of false positives. In Section four, we introduce the proposed procedure and analyze its properties. In the fifth Section, we present a computational case study for SARS‐CoV‐2 PCR tests. Finally, we present conclusions and directions for further research.

\section{Preliminaries}

Let $p = P(P_s=1)$ be the probability of a subject being positive (prevalence rate), where $P_s \in \{0,1\}$ is a random variable describing that the subjects is positive ($P_s=1$) or negative ($P_s=0$). Let
$Sp$ be the test specificity, i.e. the conditional probability that the outcome of the test is negative given the subjects is negative. Let
$Se$ be test sensitivity, i.e. the conditional probability that the outcome of the test is positive given the subjects is positive.

Probability of a test being true negative in a given test is $Sp\,(1-p)$.

Probability of a test being true positive in a given test is $Se\,p$.

Probability of a test being false negative in a given test is $(1-Se)p$.

Probability of a test being false positive in a given test is $(1-Sp)(1-p)$.

Let $Sp_I$ and $Se_I$ denote specificity and sensitivity of an individual test. 
If each subject is tested individually, the expected number of false negatives $\mathbb{E}(FN)$, the expected number of false positives $\mathbb{E}(FP)$, and the number of tests $T$ per subject are:

\begin{equation}
\mathbb{E}(FN)=(1-Se)p \\
\end{equation}
\begin{equation}
\mathbb{E}(FP)=(1-Sp)(1-p) \\
\end{equation}
\begin{equation}
T=1
\end{equation}

Consider a pool of size $n$. The probability that there is at least one positive subject in the pool (pool is positive) is 
\begin{equation}
p_P=1-(1-p)^n
\end{equation}

Let $Sp_P$ and $Se_P$ denote specificity and sensitivity of a pool test. It is reasonable to assume that $Sp_P=Sp_I=Sp$, since in both cases we deal with a sample containing no virus RNA. 
\begin{equation}
\begin{split}
Se_P=\mathbb{E}\big(Se_P(n,k), k=1,\dots,n\big)= \\
\frac{\sum_{k=1}^{k=n}\big(Se_P(n,k)Pr(k;n,p)\big)}{\sum_{k=1}^{k=n}Pr(k;n,p)}=\\
\frac{\sum_{k=1}^{k=n}\big(Se_P(n,k)Pr(k;n,p)\big)}{p_P}
\end{split}
\end{equation}
where $Se_P(n,k)$ is the sensitivity of a pool test with the pool of size $n$ and $k$ positive subjects in the pool, and $Pr(k;n,p)$ is the probability of having $k$ positive subjects in the pool of size $n$ that could be derived from binomial distribution. $Se_P$ is the same for each pool assuming constant pool size and homogeneous subjects (i.e. constant $p$). In practice $Sp_P=Sp_I=Sp$ and $Se_I$ are close to $1$. $Se_P, Se_P(n, k) \leq Se_I$ and $Se_P(n, k)$ decreases with growing $n$ and increases with $k$.

Let $T \in \{0,1\}$ denote the random variable being outcome of the pool test. The probability of a pool test being (true or false) positive is:
\begin{equation}
P(T=1)=Se_P\,p_P+(1-Sp)(1-p_P)
\end{equation}
and the probability of a pool test being (true or false) negative is:
\begin{equation}
P(T=0)=(1-Se_P)p_P+Sp(1-p_P)
\end{equation}

\section{Standard Dorfman pool testing procedure}

The detailed analysis is presented in Appendix \ref{App:A}. In this Section we present the analytical formulas for the expected number of false negatives $\mathbb{E}(FN')$, number of false positives $\mathbb{E}(FP')$, and expected number of tests $\mathbb{E}(T')$.
\begin{equation}
\begin{split}
\mathbb{E}(FN')=\\
p\sum_{k=1}^{k=n}\Big(\big((1 - Se_ISe_P(n,k))Pr(k-1;n-1,p)\Big) 
\end{split}
\end{equation}
\begin{equation}
\begin{split}
\mathbb{E}(FP')= \\
(1-Sp)\Big(P(T=1)- \\ \sum_{k=1}^{k=n}\big(Se_P(n,k)Pr(k-1;n-1,p)\big)p\Big)
\end{split}
\end{equation}
\begin{equation}
\mathbb{E}(T')=\frac{1}{n}+P(T=1)
\end{equation}

\section{Modified pool testing procedure}

The weak point of pool tests is decreased sensitivity  compared to an individual test. In result, the number of false negatives may be increased. This weak point could be, however, resolved by repeating test for the same pool, if the results of the previous tests were negative, until the last test was positive or a given number of tests were negative. In other words, we assume that the pool test is positive, if at least one result was positive and we assume that the pool test is negative, if all results were negative. The decision tree of the repeated pool test is presented in Figure \ref{fig:scheme}. Note, that we assume that for each pool test, different samples from each subject are used, so the results of the repeated tests may be assumed to be independent. 

\begin{figure}
\begin{center}
  \includegraphics[width=0.48\textwidth]{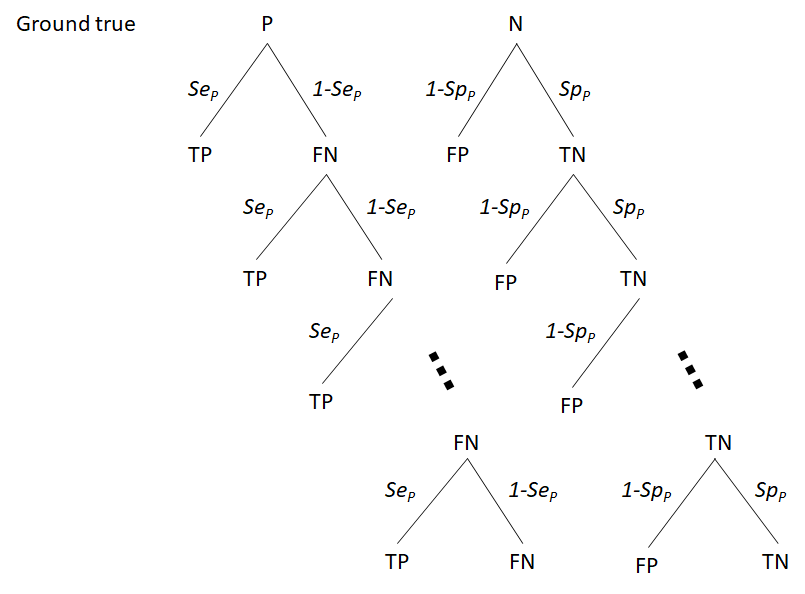}
\caption{Decision tree of the repeated pool test}
\label{fig:scheme}       
    
\end{center}
\end{figure}

Let $T'' \in \{0,1\}$ denote the random variable being outcome of the pool test (with potential repeats). The probability of a pool test being (true or false) positive is:
\begin{equation}
P(T''=1)=Se''_P\,p_P+(1-Sp'')(1-p_P)
\end{equation}
and the probability of a pool test being (true or false) negative is:
\begin{equation}
P(T''=0)=(1-Se''_P)p_P+Sp''(1-p_P)
\end{equation}
where:
\begin{equation}
Sp''_P=Sp^r \leq Sp
\end{equation}
is the repeated test specificity and  
\begin{equation}
\begin{split}
Se_P''(n,k)= 
1-\big(1-Se_P(n,k)\big)^r
\geq \\
Se_P(n,k), k=1,\dots,n 
\end{split}
\end{equation}
\begin{equation}
Se_P''=\mathbb{E}\big(Se_P''(n,k), k=1,\dots,n\big) \geq Se_P
\end{equation}
is the repeated test sensitivity (see Appendix \ref{App:B}).  

\begin{lemma}
$P(T''=1) \geq P(T=1)$ and $P(T''=0) \leq P(T=1)$.
\end{lemma}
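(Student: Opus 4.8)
The plan is to reduce both inequalities to one elementary sign computation, exploiting that the second is merely the complement of the first. First I would subtract the formula defining $P(T=1)$ from the formula defining $P(T''=1)$; after collecting the $p_P$ and the $(1-p_P)$ terms this gives
\[
P(T''=1)-P(T=1)=\big(Se''_P-Se_P\big)\,p_P+\big(Sp-Sp''_P\big)\,(1-p_P).
\]
Then I would check that every factor on the right is nonnegative. Since $p_P=1-(1-p)^n$, both $p_P$ and $1-p_P$ lie in $[0,1]$. The bound $Se''_P\ge Se_P$ is exactly the inequality recorded just before the statement (equivalently, it follows from the pointwise estimate $Se''_P(n,k)=1-\big(1-Se_P(n,k)\big)^r\ge Se_P(n,k)$ by averaging both sides against the common weights $Pr(k;n,p)/p_P$). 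Finally $Sp-Sp''_P=Sp-Sp^r\ge 0$, because $Sp\in[0,1]$ and $r\ge 1$. Hence the right-hand side is $\ge 0$, i.e.\ $P(T''=1)\ge P(T=1)$.

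For the second inequality I would use only that a pool-test outcome is a $\{0,1\}$-valued random variable, so $P(T''=0)=1-P(T''=1)$ and $P(T=0)=1-P(T=1)$; subtracting the inequality just proved then yields $P(T''=0)\le P(T=0)$. (The right-hand side printed in the statement reads $P(T=1)$, which I take to be a typographical slip for $P(T=0)$: with $P(T=1)$ there the claim fails, e.g.\ when $p$ is small, where $P(T''=0)$ is close to $Sp^r$ while $P(T=1)$ is close to $1-Sp$.)

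I do not expect any genuine obstacle. The entire argument is the grouping displayed above together with the two monotonicity facts $Se''_P\ge Se_P$ and $Sp^r\le Sp$ that are established in the lines preceding the lemma. The one point worth spelling out explicitly is the passage from the pointwise sensitivity inequality to $Se''_P\ge Se_P$: this is valid precisely because $Se_P$ and $Se''_P$ are the \emph{same} convex combination of $Se_P(n,k)$ and $Se''_P(n,k)$ respectively, so the inequality survives the averaging; everything else is immediate from $0\le p,Sp\le 1$ and $r\ge 1$.
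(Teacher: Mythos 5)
Your proof is correct and follows essentially the same route as the paper, which simply asserts the result as a "direct consequence of $Sp''_P \leq Sp$ and $Se''_P \geq Se_P$"; you merely make the subtraction and sign check explicit, and your justification of the averaging step for $Se''_P \geq Se_P$ is sound. Your reading of the second inequality as a typographical slip for $P(T''=0) \leq P(T=0)$ is also right: as printed the claim fails for small $p$, exactly as you note, and the complement argument gives the intended statement.
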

\begin{proof}
It is direct consequence of $Sp''_P \leq Sp$ and $Se_P'' \geq Se_P$
\end{proof}

The detailed analysis is presented in Appendix \ref{App:B}. In this Section we present the analytical formulas for the expected number of false negatives $\mathbb{E}(FN'')$, expected number of false positives $\mathbb{E}(FP'')$, the expected number of individual tests $\mathbb{E}(T''_I)$, and expected number of tests $\mathbb{E}(T'')$.

\begin{equation}
\begin{split}
\mathbb{E}(FN'')= \\
\,p\sum_{k=1}^{k=n}\Big(\big((1-Se_I\,Se''_P(n,k))Pr(k-1;n-1,p)\Big)
\end{split}
\end{equation}
\begin{equation}
\begin{split}
\mathbb{E}(FP'')= \\
(1-Sp)\Big(P(T''=1) - \\
\sum_{k=1}^{k=n}\big(Se''_P(n,k)Pr(k-1;n-1,p)\big)p\Big)
    \end{split}
\end{equation}
\begin{equation}
\begin{split}
\mathbb{E}(T''_I)=P(T''=1) \\
\end{split}
\end{equation}
\begin{equation}
\begin{split}
\mathbb{E}(T'')= \\
\frac{1}{n}+ \\
\frac{1}{n}\sum_{l=2}^{r}\Big((1-p_P)Sp^{l-1} + \\  \sum_{k=1}^{k=n}\big(1-Se_P(n, k)\big)^{l-1}Pr(k;n,p)\Big)+\\
P(T''=1)
\end{split}
\end{equation}

\begin{lemma}
With the same parameters, the modified procedure has lower or equal expected number of false negatives than the standard Dorfman test:
$$\mathbb{E}(FN'') \leq \mathbb{E}(FN') 
$$
\end{lemma}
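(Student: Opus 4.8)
The plan is to compare the two expressions for $\mathbb{E}(FN')$ and $\mathbb{E}(FN'')$ term by term. Both sums range over $k=1,\dots,n$ and share the common nonnegative factor $p\,Pr(k-1;n-1,p)$ in each summand; the only difference is that $\mathbb{E}(FN')$ contains the factor $\bigl(1-Se_I\,Se_P(n,k)\bigr)$ while $\mathbb{E}(FN'')$ contains $\bigl(1-Se_I\,Se_P''(n,k)\bigr)$. So it suffices to show, for each $k$, that
\begin{equation}
1-Se_I\,Se_P''(n,k) \leq 1-Se_I\,Se_P(n,k),
\end{equation}
which is equivalent to $Se_I\,Se_P''(n,k) \geq Se_I\,Se_P(n,k)$.

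Since $Se_I \geq 0$, this reduces further to $Se_P''(n,k) \geq Se_P(n,k)$ for every $k=1,\dots,n$. But that inequality is exactly the one already recorded in the excerpt, namely $Se_P''(n,k)=1-\bigl(1-Se_P(n,k)\bigr)^r \geq Se_P(n,k)$, which holds because $0 \leq 1-Se_P(n,k) \leq 1$ implies $\bigl(1-Se_P(n,k)\bigr)^r \leq 1-Se_P(n,k)$ for any integer $r \geq 1$. (If one wants a self-contained argument rather than citing the earlier display, factor $\bigl(1-Se_P(n,k)\bigr) - \bigl(1-Se_P(n,k)\bigr)^r = \bigl(1-Se_P(n,k)\bigr)\bigl(1-(1-Se_P(n,k))^{r-1}\bigr) \geq 0$.)

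Multiplying the per-$k$ inequality by the nonnegative weight $p\,Pr(k-1;n-1,p)$ and summing over $k$ preserves the inequality, giving $\mathbb{E}(FN'') \leq \mathbb{E}(FN')$. I expect no real obstacle here: the whole argument is a monotonicity observation plus linearity of the sum, and the only thing to be slightly careful about is confirming that every factor being used as a "multiplier" ($p$, the binomial probabilities $Pr(k-1;n-1,p)$, and $Se_I$) is indeed nonnegative, which is immediate since they are probabilities. The same structural comparison incidentally shows the inequality is strict whenever $p>0$, $Se_I>0$, $r\geq 2$, and $Se_P(n,k)\in(0,1)$ for at least one $k$ with positive binomial weight — worth a one-line remark if the authors want it, but not required by the statement.
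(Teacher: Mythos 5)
Your proof is correct and follows exactly the paper's argument: the paper's own proof is the one-line observation that the lemma ``is a direct consequence of $Se_P''(n,k) \geq Se_P(n,k)$, $k=1,\dots,n$,'' and you have simply spelled out the term-by-term comparison and the nonnegativity of the weights $p$, $Se_I$, and $Pr(k-1;n-1,p)$ that make that observation rigorous. No gaps; your added remark on strictness is a correct but optional bonus.
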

\begin{proof}
It is a direct consequence of $Se_P''(n,k) \geq Se_P(n,k), k=1,\dots,n$.
\end{proof}

\begin{lemma}
With the same parameters, the modified procedure has greater or equal expected number of individual tests and total number of tests than the standard Dorfman test:
$$\mathbb{E}(T''_I) \geq \mathbb{E}(T'_I)$$
$$\mathbb{E}(T'') \geq \mathbb{E}(T')$$
\end{lemma}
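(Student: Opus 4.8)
The approach is to reduce both inequalities to the already-proved fact $P(T''=1)\geq P(T=1)$ (the first Lemma of this section, stating $P(T''=1)\geq P(T=1)$ and the companion inequality for $P(T''=0)$), together with the trivial observation that the additional retesting terms are sums of nonnegative quantities.

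I would first dispose of the individual-test count. In the standard Dorfman procedure a pool forces individual retesting of all $n$ of its members exactly when its pool test comes out positive, an event of probability $P(T=1)$; since there is one pool per $n$ subjects, the expected number of individual tests per subject is $\mathbb{E}(T'_I)=P(T=1)$. In the modified procedure individual testing is triggered precisely when the (possibly repeated) pool test terminates positively, so $\mathbb{E}(T''_I)=P(T''=1)$, which is exactly the formula for $\mathbb{E}(T''_I)$ derived above. Hence $\mathbb{E}(T''_I)-\mathbb{E}(T'_I)=P(T''=1)-P(T=1)\geq 0$ by the first Lemma, which gives the first inequality.

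For the total number of tests I would subtract the two closed forms. Using $\mathbb{E}(T')=\frac{1}{n}+P(T=1)$ and the formula for $\mathbb{E}(T'')$, the difference is
$$
\mathbb{E}(T'')-\mathbb{E}(T')=\frac{1}{n}\sum_{l=2}^{r}\Big((1-p_P)Sp^{l-1}+\sum_{k=1}^{n}\big(1-Se_P(n,k)\big)^{l-1}Pr(k;n,p)\Big)+\big(P(T''=1)-P(T=1)\big).
$$
Every factor occurring in the double sum lies in $[0,1]$: the quantity $1/n$, the power $Sp^{l-1}$, the power $\big(1-Se_P(n,k)\big)^{l-1}$ (because $Se_P(n,k)$ is a probability), the binomial mass $Pr(k;n,p)$, and $1-p_P=(1-p)^n$. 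Thus the whole sum over $l$ is nonnegative, and the bracketed term is nonnegative by the first Lemma, so $\mathbb{E}(T'')\geq\mathbb{E}(T')$. When $r=1$ the sum over $l$ is empty and $T''$ coincides with $T$, so equality holds, consistent with the statement.

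The calculations here are routine; the only step demanding genuine care is the bookkeeping that identifies $\mathbb{E}(T'_I)$ with $P(T=1)$ and decomposes $\mathbb{E}(T'')$ into the one group test per pool, the expected number of retests, and the expected number of individual tests, matching the quantities defined in Appendices~\ref{App:A} and~\ref{App:B}. Once those identities are granted, both inequalities follow immediately from the first Lemma and the nonnegativity of probabilities.
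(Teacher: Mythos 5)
Your proof is correct and follows essentially the same route as the paper's: the first inequality is reduced to $P(T''=1)\geq P(T=1)$, and the second is obtained by noting that $\mathbb{E}(T'')$ and $\mathbb{E}(T')$ share the common term $\frac{1}{n}$ for the first round of pool tests, differ by a manifestly nonnegative sum accounting for the retests, and otherwise differ only by $P(T''=1)-P(T=1)\geq 0$. Your version merely makes explicit the term-by-term comparison that the paper states in words, which is a welcome clarification but not a different argument.
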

\begin{proof}
$\mathbb{E}(T''_I) \geq \mathbb{E}(T'_I)$ results from $P(T''=1) \geq P(T=1)$.
$\mathbb{E}(T'') \geq \mathbb{E}(T')$ results from $\mathbb{E}(T''_I) \geq \mathbb{E}(T'_I)$ and the fact that already the first iteration of pool tests in the modified procedure requires the same number of tests as the first stage in the standard procedure.
\end{proof}

\begin{lemma}
The expected number of false negatives if greater or equal than in the case of individual tests:
$$\mathbb{E}(FN'') \geq \mathbb{E}(FN)$$
and $\mathbb{E}(FN'')$ achieves minimum equal to $(1-Se_I)p = \mathbb{E}(FN)$, if $Se''_P(n,k) = 1, k=1,\dots,n$. 
\end{lemma}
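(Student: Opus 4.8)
The plan is to work directly from the closed-form expression for $\mathbb{E}(FN'')$ stated above and bound it below by $\mathbb{E}(FN)=(1-Se_I)p$ by an elementary monotonicity argument. The first step is the observation that $Se''_P(n,k)$, being a sensitivity, is a probability, so $0 \leq Se''_P(n,k) \leq 1$ for every $k=1,\dots,n$; hence $1 - Se_I\,Se''_P(n,k) \geq 1 - Se_I$, since $Se_I \geq 0$.

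The second step is to note that the coefficients $Pr(k-1;n-1,p)$ appearing in the formula for $\mathbb{E}(FN'')$ are exactly the binomial probabilities with parameters $n-1$ and $p$ evaluated at $0,1,\dots,n-1$ as $k$ runs from $1$ to $n$, so they are nonnegative and sum to $1$. Therefore $\mathbb{E}(FN'')$ equals $p$ times a convex combination (a nonnegative weighted average) of the quantities $1 - Se_I\,Se''_P(n,k)$, and by the pointwise bound from the first step this average is at least $1-Se_I$. Multiplying the pointwise inequality by the nonnegative weight $p\,Pr(k-1;n-1,p)$ and summing over $k$ thus yields $\mathbb{E}(FN'') \geq (1-Se_I)p\sum_{k=1}^{n}Pr(k-1;n-1,p) = (1-Se_I)p = \mathbb{E}(FN)$.

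For the equality claim I would simply substitute $Se''_P(n,k)=1$ for all $k=1,\dots,n$ into the formula: every factor $1-Se_I\,Se''_P(n,k)$ collapses to $1-Se_I$, which factors out of the sum, and the remaining $\sum_{k=1}^{n}Pr(k-1;n-1,p)=1$ again, so $\mathbb{E}(FN'')=(1-Se_I)p$. Combined with the inequality just established, this shows that the value $(1-Se_I)p=\mathbb{E}(FN)$ is the minimum of $\mathbb{E}(FN'')$ over the admissible values of the pool sensitivities, attained under the stated condition. I do not anticipate any genuine obstacle here; the only detail requiring a moment's attention is the index shift that identifies $\{Pr(k-1;n-1,p)\}_{k=1}^{n}$ as a complete binomial mass function summing to one, which is the same identity already used implicitly in the derivation of $\mathbb{E}(FN'')$ in Appendix \ref{App:B}.
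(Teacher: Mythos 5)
Your proposal is correct and follows essentially the same route as the paper: both arguments rest on the monotonicity of $\mathbb{E}(FN'')$ in the pool sensitivities (you phrase it as the pointwise bound $1-Se_I\,Se''_P(n,k)\geq 1-Se_I$ weighted by the binomial probabilities summing to one, the paper as $\mathbb{E}(FN'')$ being non-increasing in each $Se''_P(n,k)\in[0,1]$), followed by the same substitution $Se''_P(n,k)=1$ to evaluate the minimum. Your version is somewhat more explicit about the index shift and the normalization $\sum_{k=1}^{n}Pr(k-1;n-1,p)=1$, but there is no substantive difference.
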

\begin{proof}
$Se''_P(n,k)) \in [0,1], k=1,\dots,n$ and $\mathbb{E}(FN'')$ in non-increasing with $Se''_P(n,k)), k=1,\dots,n$, thus $\mathbb{E}(FN'')$ achieves minimum for $Se''_P(n,k) = 1, k=1,\dots,n$.

If $Se''_P(n,k) = 1, k=1,\dots,n$ then:
\begin{equation}
\begin{split}
\mathbb{E}(FN'')= \\
\,p\,\sum_{k=1}^{k=n}\Big(\big((1-Se_I)Pr(k-1;n-1,p)\Big) = \\ \,p(1-Se_I)\sum_{k=1}^{k=n}Pr(k-1;n-1,p) = \\
\,p\,(1-Se_I) = \mathbb{E}(FN)
\end{split}
\end{equation}
\end{proof}

Note, that since values of $Se_P''(n,k)$ become very close to 1 in the repeated pool test, $\mathbb{E}(FN'')$ may be very close to $\mathbb{E}(FN)$.


\section{Computational case study}

We verify the presented analysis and evaluate the proposed procedure using recent data about the influence of the dilution effect on the sensitivity of SARS‐CoV‐2 PCR tests \cite{Bateman2020}. Since, however, the authors presented results (values of $Se(n,k)$) only for some specific values of $n=1,5,10,50$ and $k=1$ we need to fit to the data a model allowing calculation of $Se(n,k)$ for other values of $n$ and $k$. We tried first the model proposed in \cite{Burns1987} and used also in \cite{Aprahamian2018}:
\begin{equation}
Se(n,k)=1-Sp+(Se_I+Sp-1)(\frac{n}{k})^\alpha
\end{equation}
with $Sp \geq 0.99$, however, this model did not fit the data perfectly (MSE = $0.000781$). So, we decided to extend this model with a linear term:
\begin{equation}
Se(n,k)=1-Sp+(Se_I+Sp-1)(\frac{n}{k})^\alpha+\beta\,k
\end{equation}
This model fitted the data almost perfectly (MSE = $0.000026$) with $\alpha=0.032482$ and $\beta=-0.001255$. We hypothesize that the linear term reflects PCR test dependence not only on the relative but also on the absolute amount of virus RNA and remaining material.

We tested the proposed procedure for prevalence rates $p = 0.001$, $0.002$, $0.005$, $0.01$, $0.02$, $0.05$, $0.1$, $0.2$, $0.3$. Our procedure involves two parameters, pool size $n$ and number of iterations of pool test $r$. Since their values are integer and reasonable ranges of their values are relatively small, all potential settings could be enumerated. We used $n=2,\dots,50$ and $r=2,\dots,5$. We did not try higher numbers of iterations, since they would be highly impractical and $5$ iterations was enough to assure the expected number of false negatives very close to that of individual tests. 

We performed two kinds of computational studies. First, we used the analytical formulas presented above. Second, we simulated the procedures for $100,000,000$ subjects drawn randomly as either positive or negative with a given prevalence rate. By comparing results of the analytical and simulation studies we verified the analytical formulas. In Table \ref{tab:ver} we present mean relative squared errors between the analytical and simulation results for particular parameters and methods. These results confirm very good agreement between the analytical and simulation study.

We compare the proposed procedure to the standard Dorfman procedure and to individual tests. Since false positives are much less important than false negatives in the context of diseases like COVID-19, we decided to use two main objectives i.e. the expected number of false negatives and the expected number of tests. Thus, we deal with a bi-objective optimization problem with two integer variables defining feasible solutions (settings of parameters) \cite{Handl2007}. For each value of $p$, only Pareto-optimal solutions were preserved. A solution is Pareto-optimal (non-dominated, efficient) if there exists no other feasible solution better on one of the objectives and not worse on other objective(s). For solutions that are not Pareto-optimal it is possible to improve one objective without deteriorating other(s), so they do not constitute reasonable parameters for pool tests.

\begin{table}[t]
\caption{Mean relative squared errors between the analytical and simulated results}
\begin{center}
\label{tab:ver}
\begin{tabular}{cccc}
Parameter\textbackslash method & Individual  & Dorfman & Proposed \\ \hline
$\mathbb{E}(T)$ &  & $1.20\mathrm{e}{-6}$ & $6.91\mathrm{e}{-7}$ \\
$\mathbb{E}(FN)$ & $1.53\mathrm{e}{-4}$ & $2.10\mathrm{e}{-5}$ & $1.78\mathrm{e}{-4}$ \\
$\mathbb{E}(FP)$ & $3.78\mathrm{e}{-7}$ & $1.33\mathrm{e}{-5}$ & $6.19\mathrm{e}{-5}$ \\

\end{tabular}
\end{center}
\end{table}

\begin{table}[t]
\caption{Minimum relative expected number of tests assuring a given maximum increase of the expected number of false negatives}
\begin{center}
\label{tab:p}
\begin{tabular}{cccc}
$p$\textbackslash $\mathbb{E}(FN)$ increase & $\leq 1\%$  & $\leq 10\%$ & $\leq 100\%$ \\ \hline
0.001 & $22.1\%, r=5$ & $16.6\%, r=4$ & $13.4\%,r=3$ \\
0.002 & $24.9\%, r=5$ & $20.5\%, r=4$ & $17.5\%,r=3$ \\
0.005 & $32.5\%, r=5$ & $29.2\%, r=4$ & $23.8\%,r=2$ \\
0.01 & $42.3\%, r=5$ & $37.0\%, r=3$ & $28.2\%,r=2$ \\
0.02 & $52.4\%, r=4$ & $45.3\%, r=3$ & $37.8\%,r=2$ \\
0.05 & $70.1\%, r=4$ & $64.1\%, r=3$ & $55.8\%,r=2$ \\
0.1 & $87.1\%, r=4$ & $81.9\%, r=3$ & $73.7\%,r=2$ \\
0.2 & - & - & - \\
0.3 & - & - & - \\

\end{tabular}
\end{center}
\end{table}

\begin{table}[t]
\caption{Average (expected) numbers of false positives}
\begin{center}
\label{tab:FP}
\begin{tabular}{cccc}
$p$& Individual tests  & Standard Dorfman & Proposed \\ \hline
0.001	&	0.00999	&	0.00039	&	0.00065	\\
0.002	&	0.00998	&	0.00051	&	0.00082	\\
0.005	&	0.00995	&	0.00073	&	0.00117	\\
0.01	&	0.00990	&	0.00094	&	0.00147	\\
0.02	&	0.00980	&	0.00127	&	0.00195	\\
0.05	&	0.00950	&	0.00173	&	0.00300	\\
0.1	&	0.00900	&	0.00237	&	0.00379	\\
0.2	&	0.00800	&	0.00280	&	0.00479	\\
0.3	&	0.00700	&	0.00622	&	0.00698	\\

\end{tabular}
\end{center}
\end{table}

\begin{figure}
\begin{tabular}{c}
 \includegraphics[width=0.4\textwidth]{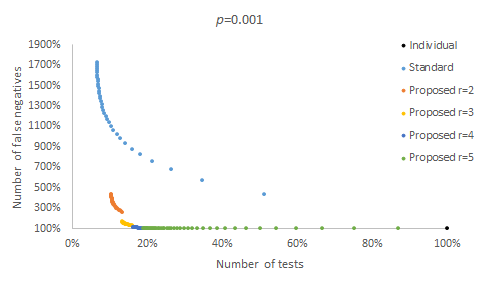} \\   \includegraphics[width=0.4\textwidth]{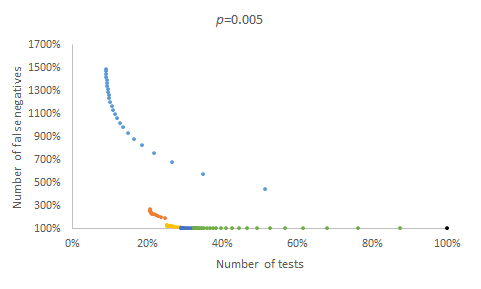} \\
 \includegraphics[width=0.4\textwidth]{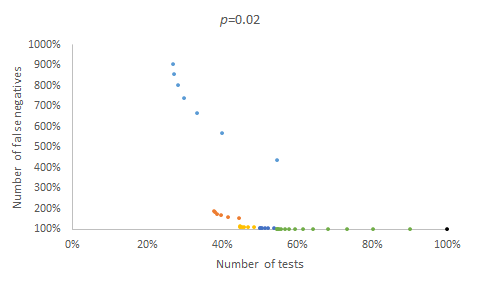} \\   \includegraphics[width=0.4\textwidth]{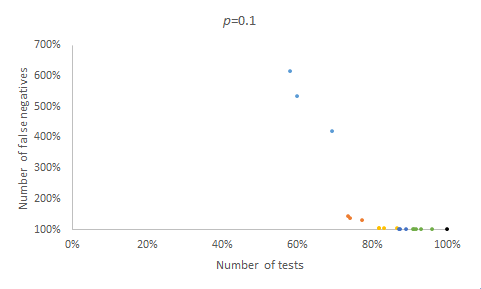} \\
\end{tabular}
\caption{Pareto optimal solutions for compared procedures}
\label{fig:Pareto}
\end{figure}

All detailed numerical results and the source code used in this computational study are available at dedicated web page \footnote{https://sites.google.com/view/pooltests/main}. In Figure \ref{fig:Pareto} we present exemplary sets of Pareto-optimal solutions of the standard Dorfman and the proposed procedure for selected values of $p$. The results are presented relative to the results corresponding to individual tests of all subjects. As it can be seen, the proposed procedure performs especially well for low prevalence rates and is able to assure much lower expected numbers of false negatives than the standard Dorfman procedure with the similar expected number of tests. Note, however, that if reduction of the number of tests is the main objective, the standard Dorfman testing procedure could assure lower expected number of tests than the proposed procedure.

Furthermore, the proposed procedure allows for a significant reduction of the expected number of tests with very low increase of the expected number of false negatives compared to individual tests. To analyze it more precisely, in Table \ref{tab:p} we present Pareto-optimal solutions for the proposed procedure that assure no more than $1\%$, $10\%$, and $100\%$ increase of the expected number of false negatives with the minimum expected number of tests. Note, that for $p=0.2$ and $p=0.3$ no such solutions exist with the expected number of tests lower than the number of tests in individual testing. These results confirm that the proposed procedure may be highly beneficial with low prevalence rates. For example, for $p=0.001$ the number of tests could be reduced to $22.1\%$ of individual tests increasing the expected number of false negatives by no more than $1\%$, and to $16.8\%$ of individual tests increasing the expected number of false negatives by no more than $10\%$. At the same time, a similar reduction of the expected number of tests (to $26.4\%$ and $18.2\%$, respectively) in the standard Dorfman procedure would yield $675\%$ and $821\%$ increase of the expected number of false negatives, respectively.

Although, we treat the numbers of tests and of false negatives as the main objectives, in Table \ref{tab:FP} we report also the average numbers of false positives for Pareto-optimal solutions in compared procedures. As it could be observed, the number of false positives in the proposed procedure is larger than in the standard Dorfman procedure, but is lower than in individual tests.

\section{Conclusions and directions for further research}

We have presented a modified pool testing procedure which could significantly reduce the number of tests at a very low cost of increasing the number of false negatives. We have derived analytical formulas for the main objectives and verified them iscreening n a simulation study. We have also shown that the proposed procedure may be highly beneficial for SARS‐CoV‐2 PCR screening tests.

In this work we assumed a homogeneous population of subjects with constant prevalence rate. An interesting research direction would be to consider heterogeneous subjects with different prevalence rates \cite{Aprahamian2018}. Another possibility would be to apply hierarchical approach \cite{Kim2007} in the case of pools tested positively that could perhaps further reduce the expected number of tests.

In this study, we assumed that pool test is repeated with the same pool. Another possibility would to re-shuffle the subjects from the negative pools into new pools. It is yet to be established is such approach could bring some benefits.

Note, however, that adding a hierarchical approach or pools re-shuffling would increase complexity, while an advantage of the proposed procedure is relative simplicity which may be very important for practical implementations.

\begin{appendices}

\section{Standard Dorfman pool testing procedure}
\label{App:A}

After obtaining a pool test result, the posterior probability of being positive should be updated for the subjects in this pool using Bayes' theorem. If the test result was negative i.e. $T=0$:
\begin{equation}
\begin{split}
P(P_s=1|T=0)=\\
\frac{P(T=0|P_s=1)P(P_s=1)}{P(T=0)}=\frac{P(T=0|P_s=1)p}{P(T=0)}
\end{split}
\end{equation}

\begin{equation}
\begin{split}
P(T=0|P_s=1)= \\
\mathbb{E}\big(1-Se_P(n,k), k=1,\dots,n|P_s=1\big)= \\
\frac{\sum_{k=1}^{k=n}\Big(\big(1-Se_P(n,k)\big)Pr(k;n,p|P_s=1)\Big)}{\sum_{k=1}^{k=n}Pr(k;n,p|P_s=1)}= \\
\sum_{k=1}^{k=n}\Big(\big(1-Se_P(n,k)\big)Pr(k;n,p|P_s=1)\Big)
\end{split}
\end{equation}

Having $k$ positive subjects given $P_s=1$ means that there are $k-1$ positive subjects among $n-1$ remaining subjects, thus $Pr(k;n,p|P_s=1) = Pr(k-1;n-1,p)$, so:
\begin{equation}
P(T=0|P_s=1)=\sum_{k=1}^{k=n}\Big(\big(1-Se_P(n,k)\big)Pr(k-1;n-1,p)\Big)
\end{equation}
Thus, the posterior probability is:
\begin{equation}
\begin{split}
(P_s=1|T=0)= \\
\frac{\sum_{k=1}^{k=n}\Big(\big(1-Se_P(n,k)\big)Pr(k-1;n-1,p)\Big)\,p}{P(T=0)}
\end{split}
\end{equation}

In an analogous way we may analyse the case $T=1$: 
\begin{equation}
\begin{split}
P(P_s=1|T=1)= \\
\frac{P(T=1|P_s=1)p}{P(T=1)}=\frac{P(T=1|P_s=1)p}{P(T=1)}
\end{split}
\end{equation}
\begin{equation}
\begin{split}
P(T=1|P_s=1)= \\
\mathbb{E}\big(Se_P(n,k), k=1,\dots,n\big|P_s=1)= \\
\sum_{k=1}^{k=n}\big(Se_P(n,k)Pr(k;n,p|P_s=1)\big)= \\
\sum_{k=1}^{k=n}\big(Se_P(n,k)Pr(k-1;n-1,p)\big)
\end{split}
\end{equation}
Thus:
\begin{equation}
\begin{split}
P(P_s=1|T=1)= \\
\frac{\sum_{k=1}^{k=n}\big(Se_P(n,k)Pr(k-1;n-1,p)\big)p}{P(T=1)}
\end{split}
\end{equation}


The expected number of subjects in negative pools that are not individually tested is:
\begin{equation}
P(T=0)
\end{equation}
The probability that one of these subjects is false negative is $P(P_s=1|T=0)$
Thus, the expected number of false negative subjects from the pool testing stage is:
\begin{equation}
\begin{split}
\mathbb{E}(FN'_P)=P(P_s=1|T=0)P(T=0)= \\
\frac{\sum_{k=1}^{k=n}\Big(\big((1-Se_P(n,k))Pr(k-1;n-1,p)\big)\Big)\,p\,P(T=0)}{P(T=0)}= \\
\sum_{k=1}^{k=n}\Big(\big((1-Se_P(n,k))Pr(k-1;n-1,p)\Big)\,p
\end{split}
\end{equation}


A subject is individually tested if its pool test result was true or false positive. The expected number of subjects individually tested in the second phase is:
\begin{equation}
\mathbb{E}(T'_I)=P(T=1)
\end{equation}
So, the expected number of false negatives in the second stage is:
\begin{equation}
\begin{split}
\mathbb{E}(FN'_I)=(1-Se_I)P(P_s=1|T=1)P(T=1)= \\
(1-Se_I)\sum_{k=1}^{k=n}\big(Se_P(n,k)Pr(k-1;n-1,p)\big)p
\end{split}
\end{equation}

In total the expected number of false negative subjects in both stages is:
\begin{equation}
\begin{split}
\mathbb{E}(FN')=\mathbb{E}(FN'_P)+\mathbb{E}(FN'_I)= \\
\sum_{k=1}^{k=n}\Big(\big((1-Se_P(n,k))Pr(k-1;n-1,p)\Big)\,p+ \\
(1-Se_I)\sum_{k=1}^{k=n}\big(Se_P(n,k)Pr(k-1;n-1,p)\big)p= \\
p\sum_{k=1}^{k=n}\Big(\big((1-Se_P(n,k) + \\ (1-Se_I)Se_P(n,k))Pr(k-1;n-1,p)\Big)= \\
p\sum_{k=1}^{k=n}\Big(\big((1 - Se_ISe_P(n,k))Pr(k-1;n-1,p)\Big)
\end{split}
\end{equation}

A subject is false positive, if its pool test was true or false positive and the individual test was false positive. Thus, the expected number of false positive subjects is:
\begin{equation}
\begin{split}
\mathbb{E}(FP')=\mathbb{E}(T'_I)(1-Sp)(1-P(P_s=1|T=1))= \\
P(T=1)(1-Sp)(1- \\
\frac{\sum_{k=1}^{k=n}\big(Se_P(n,k)Pr(k-1;n-1,p)\big)p}{P(T=1)})= \\
(1-Sp)\Big(P(T=1) - \\
\sum_{k=1}^{k=n}\big(Se_P(n,k)Pr(k-1;n-1,p)\big)p\Big)
\end{split}
\end{equation}

To calculate the expected number of tests we add the number of pool tests and the expected number of subjects tested individually in the second phase:
\begin{equation}
\mathbb{E}(T')=\frac{1}{n}+\mathbb{E'}(T_I)=\frac{1}{n}+P(T=1)
\end{equation}

\section{Modified pool testing procedure}

\label{App:B}

In repeated pool test we need to update test specificity and sensitivity for each value of $k$. As can be seen from the Figure \ref{fig:scheme} the repeated pool test specificity with $r$ iterations is 
\begin{equation}
Sp''_P=Sp^r \leq Sp
\end{equation}
and the repeated pool test sensitivity is  
\begin{equation}
\begin{split}
Se_P''(n,k)= 
1-\big(1-Se_P(n,k)\big)^r
\geq \\
Se_P(n,k), k=1,\dots,n 
\end{split}
\end{equation}
Obviously $Se''_P(n,k) > Se_P(n,k), k=1,\dots,n$, if $Se_P(n,k) < 1$. 

Note that the sensitivity of the repeated pool test may be higher than the sensitivity of an individual test if
\begin{equation}
Se''_P=\frac{\sum_{k=1}^{k=n}\big(Se''_P(n,k)Pr(k;n,p)\big)}{p_P}>Se_I
\end{equation}

Following the analysis presented for the standard Dorfman pool testing procedure we can calculate the expected numbers of false negatives in the first $\mathbb{E}(FN''_P)$ and the second $\mathbb{E}(FN''_I)$ stage, the total expected numbers of false negatives $\mathbb{E}(FN''_I)$, the expected number of tests in the second stage $\mathbb{E}(T''_I)$, and the expected number of false positives $\mathbb{E}(FP'')$:

\begin{equation}
\mathbb{E}(T''_I)=P(T''=1)
\end{equation}

\begin{equation}
\begin{split}
\mathbb{E}(FN''_P)= \\
\sum_{k=1}^{k=n}\Big(\big((1-Se''_P(n,k))Pr(k-1;n-1,p)\Big)\,p
\end{split}
\end{equation}

\begin{equation}
\begin{split}
\mathbb{E}(FN''_I)=\\
(1-Se_I)\sum_{k=1}^{k=n}\big(Se''_P(n,k)Pr(k-1;n-1,p)\big)p
\end{split}
\end{equation}

\begin{equation}
\begin{split}
\mathbb{E}(FN'')= \\
\,p\sum_{k=1}^{k=n}\Big(\big((1-Se_I\,Se''_P(n,k))Pr(k-1;n-1,p)\Big)
\end{split}
\end{equation}
\begin{equation}
\begin{split}
\mathbb{E}(FP'')= \\
(1-Sp)\Big(P(T''=1) -\sum_{k=1}^{k=n}\big(Se''_P(n,k)Pr(k-1;n-1,p)\big)p\Big)
    \end{split}
\end{equation}



To calculate the expected number of tests we need to take into account that the pool test for a pool is repeated if all previous tests for this pool were negative. Consider iteration number $l$. The probability that a pool was true negative in iterations $1,\dots,l-1$ is:
\begin{equation}
(1-p_P)Sp^{l-1}
\end{equation}
To calculate the probability that a pool was false negative in iterations $1,\dots,l-1$ we need to consider each $k=1,\dots,n$. This probability for a given $k$ is:
\begin{equation}
p_P(1-Se_P(n, k))^{l-1}
\end{equation}
and for all $k=1,\dots,n$:

\begin{equation}
\begin{split}
p_P\frac{\sum_{k=1}^{k=n}\Big(\big(1-Se_P(n, k)\big)^{l-1}Pr(k;n,p)\Big)}{p_P}= \\ \sum_{k=1}^{k=n}\Big(\big(1-Se_P(n, k)\big)^{l-1}Pr(k;n,p)\Big)
\end{split}
\end{equation}
thus the probability that a pool was negative in iterations $1,\dots,l-1$ is:
\begin{equation}
(1-p_P)Sp^{l-1} + \sum_{k=1}^{k=n}\Big(\big(1-Se_P(n, k)\big)^{l-1}Pr(k;n,p)\Big)
\end{equation}
and the expected number of tests is
\begin{equation}
\begin{split}
\mathbb{E}(T'')= \\
\frac{1}{n}+ \\
\frac{1}{n}\sum_{l=2}^{r}\Big((1-p_P)Sp^{l-1} + \sum_{k=1}^{k=n}\big(1-Se_P(n, k)\big)^{l-1}Pr(k;n,p)\Big)+ \\
\mathbb{E}(T''_I)= \\
\frac{1}{n}+ \\
\frac{1}{n}\sum_{l=2}^{r}\Big((1-p_P)Sp^{l-1} + \sum_{k=1}^{k=n}\big(1-Se_P(n, k)\big)^{l-1}Pr(k;n,p)\Big)+ \\
P(T''=1)
\end{split}
\end{equation}

\end{appendices}

\bibliographystyle{plain}
\bibliography{references}
\end{document}